\documentclass[letterpaper, 10 pt, conference]{ieeeconf}

\IEEEoverridecommandlockouts
\usepackage{amsmath}
\usepackage{amssymb}
\usepackage{graphicx}
\usepackage{bm}
\usepackage{url}
\usepackage{booktabs}
\newtheorem{proposition}{Proposition}
\newtheorem{theorem}{Theorem}
\newtheorem{corollary}{Corollary}
\newtheorem{remark}{Remark}
\newtheorem{assumption}{Assumption}

\newcommand{\lop}{letter}

\newcommand{\E}{\mathbb{E}}
\newcommand{\tr}{\operatorname{tr}}
\newcommand{\R}{\mathbb{R}}
\newcommand{\simplex}{\Delta^{m-1}}
\newcommand{\w}{\bm{w}}
\newcommand{\cost}{\bm{c}}

\title{\LARGE \bf
Preference-Conditioned Group-Relative Policy Optimization for\\
Terminally Constrained Multi-Objective Control
}

\author{Md Ragib Rownak\textsuperscript{1},
        Sidra Ghayour Bhatti\textsuperscript{1},
        and Qadeer Ahmed\textsuperscript{1}%
\thanks{\textsuperscript{1}Center for Automotive Research,
  The Ohio State University, 930 Kinnear Rd, Columbus, OH 43212, USA.
  E-mail: \texttt{\{rownak.1, bhatti.39, ahmed.358\}@osu.edu}}%
}

\makeatletter
\def\ps@arxivnotice{%
  \def\@oddhead{}\def\@evenhead{}%
  \def\@oddfoot{\hfil\parbox[t]{\textwidth}{\centering\scriptsize
    This work has been submitted to the IEEE for possible publication.
    Copyright may be transferred without notice, after which this
    version may no longer be accessible.}\hfil}%
  \def\@evenfoot{\@oddfoot}}
\makeatother

\begin{document}

\maketitle
\thispagestyle{arxivnotice}
\pagestyle{empty}

\begin{abstract}
Multi-objective optimal control is commonly addressed by
reinforcement learning with fixed scalarization weights, requiring a
separate training run and constraint enforcement for each trade-off.
This \lop{} extends critic-free group-relative policy optimization to
terminally constrained multi-objective control: a single
preference-conditioned policy is trained over random preferences and
returns a feasible approximation of the Pareto front in one run. Three
structural results are established. Group normalization after
scalarization preserves the prescribed preference, whereas
per-objective normalization induces policy-dependent distortion; with
pairwise distinct normalization scales, only single-objective
preferences remain unchanged. A positive floor prevents the
terminal-constraint multiplier from vanishing and renders the penalty
exact when the floor exceeds a geometric multiplier. Deterministic
terminal violation is bounded by expected stochastic violation plus a
term depending on terminal-state curvature and action variance, so
stochastic feasibility alone does not certify deterministic execution.
On a heavy-duty range-extender powertrain with six stage costs and a
terminal state-of-charge band, a representative training run places all 32
preference-grid solutions in band on six cycles inside and outside
the training pool. Averaged over ten runs, the method yields
more in-band solutions under deterministic execution than a
critic-based conditioned baseline with the same terminal-constraint
handling.
\end{abstract}

\section{INTRODUCTION}

Energy management of electrified powertrains is a finite-horizon
optimal control problem with competing stage costs, among them fuel
consumption, battery aging, catalyst aging and tailpipe emissions.
The only state constraint is imposed at the terminal step, a battery
state-of-charge (SOC) band at trip
end~\cite{c_egan2023,c_rownak2026}. Reinforcement learning (RL)
treatments reduce the cost vector to a single weighted sum before
learning~\cite{c_egan2023}: the weights are fixed a priori, each
trade-off point requires one full training run, and the policy carries
no information about neighboring trade-offs. Multi-objective studies
retain this structure: a multi-agent formulation with per-agent
rewards~\cite{c_li2025ems} and a hierarchical formulation with a
weighted multi-term reward~\cite{c_coskun2024} both optimize a single
scalarization and report no Pareto front. A Pareto front
from a single training run has not been reported for powertrain energy
management.

In multi-objective RL (MORL), single-policy methods condition the
actor on a preference vector $\w \in \simplex$ and train over
preference draws~\cite{c_lu2023,c_terekhov2024,c_ambadkar2026},
which loses no Pareto points for stationary stochastic
policies~\cite{c_lu2023,c_hayes2022}. These methods are
critic-based: a vector-valued value network is conditioned on $\w$
and per-objective advantages are scalarized inside the surrogate
loss~\cite{c_terekhov2024,c_ambadkar2026}. Contemporaneous
critic-free group-relative work targets language
models~\cite{c_nourzad2026,c_li2026pareto}.
Constrained MORL is less developed: the closest study trains a
preference-conditioned
TD3 policy for airship trajectory planning under two objectives and
expectation constraints~\cite{c_meng2026}, with dual convergence
resting on a dense path-safety cost, multipliers bounded only from
above, and no formal results. None of these studies treats a terminal
equality-band constraint, and none is critic-free.

Group-relative policy optimization (GRPO)~\cite{c_shao2024} replaces
the learned critic of proximal methods~\cite{c_schulman2017} with
group rollout statistics. Its terminally constrained variant,
Advantage-Ranked GRPO (A-GRPO)~\cite{c_rownak2026cdc}, is developed
in the authors' prior work with boundedness and feasibility
properties. With several objectives, existing multi-component
reward methods recommend per-component
normalization~\cite{c_girgis2026,c_ichihara2025}. This \lop{} establishes the
converse for terminal constraints: per-objective normalization
optimizes the policy-dependent distorted preference
$\tilde{w}_i \propto w_i / s_i(\pi)$ for group scales $s_i(\pi)$,
whereas normalization
after scalarization preserves the prescribed preference direction. A
second issue is the terminal multiplier: sustained batch feasibility
drives it geometrically toward zero~\cite{c_rownak2026cdc}, while the
stochastic feasibility test is passed and the deterministic policy
leaves the terminal band. To the authors' knowledge, no constrained RL method bounds the
multiplier from
below~\cite{c_altman1999,c_achiam2017,c_paternain2019,c_stooke2020};
the floor introduced here excludes the decay mode and renders the
penalty exact under a geometric-multiplier condition.

The contributions are:
\begin{enumerate}
\item preference-conditioned A-GRPO, a critic-free method for
  terminally constrained multi-objective control whose objective
  channel reduces to the fixed-weight form at every $\w$
  (Sections~II--III);
\item a preference-distortion analysis characterizing the fixed
  points of per-objective group normalization (Proposition~1,
  Corollary~1); a multiplier floor that rules out vanishing
  constraint pressure, with an exact penalty above a geometric
  multiplier (Theorem~1); a bound separating deterministic from stochastic
  feasibility (Theorem~2); and a weak-Pareto characterization of the
  front (Remark~1) (Section~IV);
\item validation on heavy-duty powertrain energy management with six
  stage costs and a terminal SOC band, where one training covers the
  simplex with band-feasible solutions on cycles inside and outside
  the training pool, at the cost of one fixed-weight run (Section~V).
\end{enumerate}

\section{PROBLEM FORMULATION}

Episodes have a fixed horizon of $T$ steps. At step
$t \in \{0,\ldots,T{-}1\}$, the policy observes
$o_t = h(x_t, d_t) \in \R^{d_o}$ and selects
$a_t \sim \pi_\theta(\cdot \mid o_t)$ from a mixed
continuous and discrete action space. The dynamics
$x_{t+1} = f(x_t, a_t, d_t)$ are deterministic with exogenous input
$d_t$, and $\gamma = 1$. Unlike the single-objective setting of~\cite{c_rownak2026cdc}, the
stage cost is a vector
$\cost_t = (c_{1,t},\ldots,c_{m,t}) \in [0,1]^m$, each component
normalized by a fixed plant reference constant. Here $m = 6$: fuel
mass including an engine-start equivalent, battery aging,
aftertreatment catalyst aging, tailpipe $\mathrm{NO}_x$ mass,
tailpipe $\mathrm{NH}_3$ slip above a soft reference, and a
catalyst-coverage tracking cost. The episode cost of
policy $\pi$ is
\begin{equation}\label{eq:vec_cost}
  \bm{J}(\pi) = \E_{\pi}\Big[\textstyle\sum_{t=0}^{T-1} \cost_t\Big]
  \in \R^m ,
\end{equation}
with components $C_{\mathrm{fuel}}$, $C_{\mathrm{bat,age}}$,
$C_{\mathrm{cat,age}}$, $C_{\mathrm{NO}_x}$, $C_{\mathrm{NH}_3}$ and
$C_{\theta}$ in the order above. The terminal constraint is retained from~\cite{c_rownak2026cdc}: with
$\xi_{\mathrm{term}} = \max(|x_T^{\ast} - x_{\mathrm{ref}}| - \varepsilon,\, 0)$
for the constrained scalar component $x_T^{\ast}$ of the terminal
state, target $x_{\mathrm{ref}}$, and tolerance $\varepsilon > 0$,
the feasible policy class is
\begin{equation}\label{eq:feas_class}
  \Pi_{\varepsilon} = \big\{\pi :
  \Pr\nolimits_{\pi}(\xi_{\mathrm{term}} = 0) \geq p_{\mathrm{safe}}\big\},
\end{equation}
with target feasibility rate $p_{\mathrm{safe}} \in (0,1]$.

For a preference vector $\w \in \simplex = \{\w \in \R^m_{\geq 0}
: \sum_i w_i = 1\}$, the scalarized constrained problem is
\begin{equation}\label{eq:pw}
  P(\w):\quad \min_{\pi \in \Pi_{\varepsilon}}\;
  \w^{\!\top} \bm{J}(\pi).
\end{equation}
A policy $\pi \in \Pi_{\varepsilon}$ is weakly Pareto optimal if no
$\pi' \in \Pi_{\varepsilon}$ satisfies
$\bm{J}(\pi') < \bm{J}(\pi)$ componentwise~\cite{c_miettinen1999}.
The objective is a single conditioned policy
$\pi_\theta(a \mid o, \w)$ whose fixed-preference slices solve
$P(\w)$ across $\simplex$. Auxiliary shaping penalties, for actuator
switching and the terminal band, are not components of $\cost_t$.

\section{PREFERENCE-CONDITIONED A-GRPO}

\subsection{Preference Sampling and Conditioning}

At each update, rollouts are collected in groups of $B$ parallel
trajectories sharing one exogenous input sequence and one preference
vector, drawn once per group as
\begin{equation}\label{eq:wdraw}
  \w \sim (1 - p_{\mathrm{eq}})\,\mathrm{Dir}(\alpha \bm{1}_m)
  + p_{\mathrm{eq}}\,\delta_{\bm{1}_m/m},
\end{equation}
a Dirichlet draw, replaced with probability $p_{\mathrm{eq}}$ by the
equal-preference vector. The draw is shared because the group
statistics that replace the critic must compare rollouts under the
same scalarization. The drawn $\w$ is appended to the policy
observation, $\pi_\theta(a \mid o, \w)$; no critic or other
conditioned network is used.

\subsection{Scalarized Group-Relative Advantage}

The shaped per-step reward is
\begin{equation}\label{eq:rw}
  \tilde{r}^{\w}_t = -\, \kappa_r\, m\, \w^{\!\top} \cost_t
    + r^{\mathrm{con}}_t ,
\end{equation}
where $\kappa_r > 0$ is the common reward scale and
$r^{\mathrm{con}}_t$ collects the terminal-constraint and
tail-shaping Lagrangian terms and the shared shaping penalties,
outside the preference weighting. The factor $m$ preserves the
unit-weight scale, so at $\w = \bm{1}_m/m$ the reward \eqref{eq:rw}
coincides with the fixed-weight composite, and the objective channel
reduces to its scalarized form at every $\w$. For rollout $b$,
$G^{\w}_{t,b} = \sum_{k=t}^{T-1} \tilde{r}^{\w}_{k,b}$ is normalized
over the group at each timestep,
\begin{equation}\label{eq:adv_sc}
  \hat{A}^{\mathrm{sc}}_{t,b}
  = \frac{G^{\w}_{t,b} - \mu_t}{\max(\sigma_t,\, \varphi_t)}\,,
\end{equation}
with $\mu_t$, $\sigma_t$ the group mean and standard deviation of
$G^{\w}_{t,b}$ and $\varphi_t$ the scale-adaptive floor
of~\cite{c_rownak2026cdc}, whose terminal feasibility ranking terms
and clipped surrogate update are applied to \eqref{eq:adv_sc}.

\subsection{Normalization Placement}

Normalization in \eqref{eq:adv_sc} acts after the preference
weighting; the alternative places it before. With per-objective
returns $G_{i,t,b} = -\kappa_r\sum_{k=t}^{T-1} c_{i,k,b}$ and group
statistics $\mu_{i,t}$, $\sigma_{i,t}$,
\begin{equation}\label{eq:adv_po}
  \hat{A}^{\mathrm{po}}_{t,b}
  = \sum_{i=1}^{m} w_i\,
    \frac{G_{i,t,b} - \mu_{i,t}}{\max(\sigma_{i,t},\, \varphi_{i,t})}\,,
\end{equation}
is re-standardized to unit scale over the group at each timestep, and
the constraint terms form a separate channel whose normalizer does not
contain the multiplier. This is the placement recommended for
multi-component rewards in~\cite{c_girgis2026,c_ichihara2025}; the
method uses \eqref{eq:adv_sc}.

\subsection{Floored Dual Update}

Each exogenous input class $c$ carries its own terminal multiplier
$\lambda_c$ on a prescribed interval
$0 < \underline{\lambda}_c \leq \lambda_c^{(0)} < \overline{\lambda}_c$.
With batch mean violation $\bar{\xi}_c$, ascent factor
$\kappa_\lambda > 1$, and decay $\delta_\lambda \in (0,1)$, the
update is
\begin{equation}\label{eq:dual_floor}
  \lambda_c^{(n+1)} = \Pi_{[\underline{\lambda}_c,\,
    \overline{\lambda}_c]}\!\left(
  \begin{cases}
    \kappa_\lambda\, \lambda_c^{(n)}, & \bar{\xi}_c^{(n)} > 0,\\[2pt]
    (1 - \delta_\lambda)\, \lambda_c^{(n)}, & \text{otherwise},
  \end{cases}\right)
\end{equation}
where $\Pi$ denotes projection onto
$[\underline{\lambda}_c, \overline{\lambda}_c]$. The update departs
in two respects from that of~\cite{c_rownak2026cdc}, where ascent is
additive in the violation and the lower bound is zero. Ascent here is
multiplicative, so the multiplier responds to any residual violation
of a preference-randomized group. The floor $\underline{\lambda}_c$
addresses the opposite mode: under sustained batch feasibility the
multiplier of~\cite{c_rownak2026cdc} decays geometrically toward zero
while the deterministic policy leaves the terminal band.

\subsection{Front Extraction}

After training, the policy is evaluated deterministically on a fixed
grid $\mathcal{W} \subset \simplex$ of vertices, the equal-preference
anchor, pairwise midpoints and Dirichlet samples.
The feasible members of
$\{\bm{J}(\pi_\theta(\cdot \mid \cdot, \w)) : \w \in
\mathcal{W}\}$ are filtered by $m$-dimensional dominance, and the
resulting set approximates the Pareto front of \eqref{eq:vec_cost}
over $\Pi_{\varepsilon}$.

\section{PREFERENCE FIDELITY AND CONSTRAINT PRESSURE}

Two choices in Section~III depart from established practice: group
normalization is placed after the preference weighting, and the
terminal multiplier carries a strictly positive floor. The first departs
from the per-component normalization recommended for
multi-component rewards, and Section~\ref{sec:fidelity} shows that
the recommendation does not transfer to a terminally constrained
problem. The results of this section concern the pre-clip policy
gradient
built from \eqref{eq:adv_sc} or \eqref{eq:adv_po} and the dual
update \eqref{eq:dual_floor}. Throughout, one group of $B$ rollouts
sharing $(d, \w)$ is considered at a fixed timestep $t$, and
$s_t = \max(\sigma_t, \varphi_t)$,
$s_{i,t} = \max(\sigma_{i,t}, \varphi_{i,t})$ denote the
normalization scales of \eqref{eq:adv_sc} and \eqref{eq:adv_po}.

\subsection{Preference Fidelity of Normalization Placement}
\label{sec:fidelity}

Write $G^{\mathrm{con}}_{t,b}$ for the return-to-go of
$r^{\mathrm{con}}_t$ alone, so that
$G^{\w}_{t,b} = m \sum_{i} w_i G_{i,t,b} + G^{\mathrm{con}}_{t,b}$,
and let $\mu_{i,t}$, $\mu^{\mathrm{con}}_t$ be the corresponding
group means.

\begin{proposition}[Preference fidelity]\label{prop:fidelity}
For every rollout $b$ of the group:

(i) the scalarized advantage \eqref{eq:adv_sc} satisfies
\begin{equation}\label{eq:sc_identity}
  \hat{A}^{\mathrm{sc}}_{t,b} = \frac{1}{s_t}\Big(
  m \sum_{i=1}^{m} w_i \big(G_{i,t,b} - \mu_{i,t}\big)
  + G^{\mathrm{con}}_{t,b} - \mu^{\mathrm{con}}_t \Big),
\end{equation}
a positive multiple of the group-centered return of the intended
scalarization: the relative weights of the $m$ objectives equal
$\w$, and the constraint terms retain their coefficient relative to
the objectives;

(ii) the per-objective advantage \eqref{eq:adv_po} satisfies
\begin{equation}\label{eq:po_identity}
  \hat{A}^{\mathrm{po}}_{t,b}
  = \Big(\sum_{j=1}^{m}\frac{w_j}{s_{j,t}}\Big)
    \sum_{i=1}^{m} \tilde{w}_{i,t}\big(G_{i,t,b} - \mu_{i,t}\big),
\end{equation}
with $\tilde{w}_{i,t} = (w_i/s_{i,t})/\sum_{j} w_j/s_{j,t}$,
a positive multiple of the group-centered return of the distorted
preference $\tilde{\w}_t \in \simplex$. Unless the $s_{i,t}$ agree
on the active components of $\w$, one has $\tilde{\w}_t \neq \w$,
and $\tilde{\w}_t$ varies with the policy and the timestep through
the group spreads.
\end{proposition}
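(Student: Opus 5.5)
Both parts are direct algebraic identities, so the plan is to expand the definitions and use linearity of the group mean; no earlier result is needed beyond the decomposition $G^{\w}_{t,b} = m\sum_i w_i G_{i,t,b} + G^{\mathrm{con}}_{t,b}$ stated just before the proposition.

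For (i), I will first note that the preference vector $\w$ is shared across the group, since it is drawn once per group by \eqref{eq:wdraw}. Hence the group mean commutes with the weighting:
\[
\mu_t = m\sum_i w_i \mu_{i,t} + \mu^{\mathrm{con}}_t .
\]
Subtracting this from $G^{\w}_{t,b}$ and dividing by $s_t = \max(\sigma_t,\varphi_t)$ gives \eqref{eq:sc_identity} directly from \eqref{eq:adv_sc}.

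Positivity of $1/s_t$ follows from the floor $\varphi_t>0$ of~\cite{c_rownak2026cdc}; I will state this explicitly as the standing property of the scale-adaptive floor. With $s_t>0$, the expression inside the parentheses is the centered return of $m\,\w^{\top}\bm{G}+G^{\mathrm{con}}$. The ratio $w_i/w_j$ and the coefficient of the constraint term relative to the objective term ($m w_i$ versus $1$) are therefore exactly those of the reward \eqref{eq:rw}. A single positive scalar does not alter these ratios.

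For (ii), I will start from \eqref{eq:adv_po} and write each summand as $(w_i/s_{i,t})(G_{i,t,b}-\mu_{i,t})$, using $s_{i,t}>0$ again by the floors $\varphi_{i,t}$. Let $Z_t=\sum_j w_j/s_{j,t}$. Then $Z_t>0$, because $\w\in\simplex$ has at least one positive entry and all scales are finite and positive. Factoring out $Z_t$ yields \eqref{eq:po_identity}.

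The distorted weights $\tilde w_{i,t}$ are nonnegative and sum to one, so $\tilde{\w}_t\in\simplex$. The later re-standardization over the group multiplies by a further positive scalar and does not change the direction. I will remark on this in one line.

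For the distortion claim, I will compare ratios on the support $S=\{i:w_i>0\}$. For $i,j\in S$,
\[
\frac{\tilde w_{i,t}}{\tilde w_{j,t}}=\frac{w_i}{w_j}\cdot\frac{s_{j,t}}{s_{i,t}} .
\]
This equals $w_i/w_j$ if and only if $s_{i,t}=s_{j,t}$. Off the support both weights are zero. Therefore $\tilde{\w}_t=\w$ if and only if the $s_{i,t}$ agree on $S$, and in every other case $\tilde{\w}_t\neq\w$. This gives the converse direction as well.

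The remaining claim is that $\tilde{\w}_t$ varies with the policy and the timestep. This is the only non-computational part and the main obstacle, since "varies" is not a sharp statement. I would handle it by observing that $s_{i,t}$ is a function of the group spread $\sigma_{i,t}$ of returns-to-go. That spread depends on $t$ and on the rollout distribution induced by $\pi_\theta$, so $\tilde{\w}_t$ is a function of these quantities rather than a fixed constant. I would present this as an interpretive consequence of the identity rather than prove nonconstancy in general.
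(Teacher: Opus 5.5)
Your proposal is correct and follows the paper's proof. Linearity of the group mean under the shared $\w$ gives \eqref{eq:sc_identity}, factoring out the positive constant $\sum_j w_j/s_{j,t}$ gives \eqref{eq:po_identity}, and the re-standardization is a further positive scaling. Your ratio argument that $\tilde{\w}_t = \w$ exactly when the $s_{i,t}$ agree on the support is also sound; the paper's proof leaves that claim to Corollary~\ref{cor:exclusion}(ii), which proves it via $w_i(1/s_{i,t} - S) = 0$.
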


\textbf{Proof.}
Group means are linear, so
$\mu_t = m\sum_i w_i \mu_{i,t} + \mu^{\mathrm{con}}_t$;
substituting into \eqref{eq:adv_sc} gives \eqref{eq:sc_identity}
with $s_t > 0$. For \eqref{eq:po_identity}, each term of
\eqref{eq:adv_po} is $w_i (G_{i,t,b} - \mu_{i,t})/s_{i,t}$;
factoring the positive constant $\sum_j w_j/s_{j,t}$ and normalizing
the coefficients onto $\simplex$ yields \eqref{eq:po_identity}.
The final re-standardization of \eqref{eq:adv_po} is a further
positive per-timestep scaling and leaves the direction of
\eqref{eq:po_identity} unchanged. \hfill$\square$

Proposition~\ref{prop:fidelity} is the counterpart of the reweighting
established in~\cite{c_girgis2026}, where the scalarize-then-normalize
advantage weights component $j$ by $\lambda_j\sigma_j/\sigma_{RS}$
rather than by the prescribed $\lambda_j$, and per-component
normalization is proposed as the remedy. Each placement preserves one
object and distorts the other. Equation~\eqref{eq:adv_sc} preserves the
objective-space preference $\w$, the quantity that parameterizes the
front over costs already on a common scale.
Equation~\eqref{eq:adv_po} preserves standardized per-component
influence but replaces $\w$ by the policy-dependent $\tilde{\w}_t$.
Because it also changes the objective-to-constraint scale, objectives
with large group spread provide a weaker opposing cost gradient for
the multiplier to overcome.

\begin{corollary}[Fixed points of the distortion]\label{cor:exclusion}
Fix $t$, let $\mathcal{S} = \{i : w_i > 0\}$ and let
$\kappa_t = \max_{i \in \mathcal{S}} s_{i,t} \,/\,
\min_{i \in \mathcal{S}} s_{i,t} \geq 1$. Under \eqref{eq:adv_po}:

(i) $w_i / \kappa_t \leq \tilde{w}_{i,t} \leq \kappa_t\, w_i$ for
every $i$;

(ii) $\tilde{\w}_t = \w$ if and only if $s_{i,t}$ takes one common
value on $\mathcal{S}$.

Hence, if the scales $s_{i,t}$ are pairwise distinct, $\tilde{\w}_t = \w$
holds exactly at the $m$ vertices of $\simplex$: every preference with
more than one active objective, the equal-preference anchor included,
is mapped to a different point of the simplex.
\end{corollary}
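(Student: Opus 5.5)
The plan is to work directly from the closed form of the distorted preference in Proposition~\ref{prop:fidelity}(ii), $\tilde{w}_{i,t} = (w_i/s_{i,t})/\sum_{j} w_j/s_{j,t}$. Everything reduces to elementary bounds on a weighted harmonic-type mean of the scales over the support $\mathcal{S}$. Indices outside $\mathcal{S}$ have $w_i = 0$, hence $\tilde{w}_{i,t} = 0$, so both (i) and (ii) hold trivially there. I therefore restrict attention to $i \in \mathcal{S}$ throughout. Write $s_{\min}$ and $s_{\max}$ for the extreme values of $s_{i,t}$ on $\mathcal{S}$, so that $\kappa_t = s_{\max}/s_{\min}$.

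For (i), I bound the denominator using $\sum_{j\in\mathcal{S}} w_j = 1$:
\begin{equation*}
  1/s_{\max} \leq \sum_{j} w_j/s_{j,t} \leq 1/s_{\min}.
\end{equation*}
The upper bound on the denominator then gives $\tilde{w}_{i,t} \geq w_i s_{\min}/s_{i,t} \geq w_i s_{\min}/s_{\max} = w_i/\kappa_t$. Symmetrically, the lower bound gives $\tilde{w}_{i,t} \leq w_i s_{\max}/s_{i,t} \leq \kappa_t w_i$.

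For (ii), the ``if'' direction is immediate. If $s_{i,t} = s$ on $\mathcal{S}$, the common factor $1/s$ cancels and $\tilde{w}_{i,t} = w_i$. For ``only if'', suppose $\tilde{\w}_t = \w$ and set $D = \sum_j w_j/s_{j,t} > 0$. For each $i \in \mathcal{S}$ we have $w_i > 0$, so dividing $w_i/(s_{i,t} D) = w_i$ by $w_i$ gives $s_{i,t} = 1/D$ for every $i\in\mathcal{S}$, which is a common value.

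The final claim follows from (ii). If the $s_{i,t}$ are pairwise distinct, a common value on $\mathcal{S}$ forces $|\mathcal{S}| = 1$, that is, $\w$ is a vertex $e_k$. At a vertex, $\tilde{\w}_t = e_k = \w$ holds trivially. Any $\w$ with at least two active components, including $\bm{1}_m/m$, therefore satisfies $\tilde{\w}_t \neq \w$.

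The only step needing care is the ``only if'' in (ii). The cancellation there is valid only because $w_i > 0$ on $\mathcal{S}$, and the condition must be stated on $\mathcal{S}$ rather than on all indices, since the scales of inactive objectives are unconstrained. I also need $s_{i,t} > 0$ so that the quotients are defined; I will take this from the positive floors $\varphi_{i,t}$, as in the proof of Proposition~\ref{prop:fidelity}.
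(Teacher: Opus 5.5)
Your proposal is correct and matches the paper's proof essentially step for step. For (i) you use the same two-sided bound $1/s_{\max} \leq \sum_j w_j/s_{j,t} \leq 1/s_{\min}$ together with the termwise bound on $w_i/s_{i,t}$, and for (ii) the same cancellation using $w_i>0$ on $\mathcal{S}$ to force $s_{i,t} = 1/\sum_j w_j/s_{j,t}$; your explicit derivation of the vertex claim from (ii), which the paper leaves implicit, is also correct.
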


\textbf{Proof.}
Write $S = \sum_{j} w_j / s_{j,t} > 0$, so
$\tilde{w}_{i,t} = (w_i/s_{i,t})/S$. (i) For $i \notin \mathcal{S}$
both bounds are $0$. Let $s_- = \min_{j \in \mathcal{S}} s_{j,t}$ and
$s_+ = \max_{j \in \mathcal{S}} s_{j,t}$, so $\kappa_t = s_+/s_-$.
Since $\sum_{j \in \mathcal{S}} w_j = 1$, bounding each term of $S$
gives $1/s_+ \leq S \leq 1/s_-$, and for $i \in \mathcal{S}$,
$w_i/s_+ \leq w_i/s_{i,t} \leq w_i/s_-$; dividing the two chains
yields $w_i/\kappa_t \leq \tilde{w}_{i,t} \leq \kappa_t w_i$. (ii) $\tilde{w}_{i,t} = w_i$ for all $i$ is
equivalent to $w_i(1/s_{i,t} - S) = 0$ for all $i$, that is
$s_{i,t} = 1/S$ on $\mathcal{S}$; conversely, if $s_{i,t} = s$ on
$\mathcal{S}$ then $S = 1/s$ and $\tilde{w}_{i,t} = w_i$.
\hfill$\square$

Under distinct scales, \eqref{eq:adv_po} optimizes the intended
preference only at the vertices; interior displacements are bounded
by the scale ratio, a
property of the current policy that is neither observed nor
controlled when $\w$ is drawn.

\subsection{Persistent Constraint Pressure}

Sustained batch feasibility drives the unfloored multiplier
of~\cite[Thm.~2]{c_rownak2026cdc} geometrically to zero, removing
the constraint gradient exactly when it is still needed; the floor
in \eqref{eq:dual_floor} excludes this mode.

\begin{theorem}[Floored dual update]\label{thm:floor}
For a fixed input class $c$, consider \eqref{eq:dual_floor} with
$0 < \underline{\lambda}_c \leq \lambda_c^{(0)} < \overline{\lambda}_c$.

(i) (Invariance) $\lambda_c^{(n)} \in
[\underline{\lambda}_c, \overline{\lambda}_c]$ for all $n \geq 0$.

(ii) (Persistence) If $\bar{\xi}_c^{(n)} = 0$ for
all $n \geq n_0$, then
$\lambda_c^{(n)} = \max\big(\underline{\lambda}_c,
(1-\delta_\lambda)^{\,n-n_0}\lambda_c^{(n_0)}\big)$, so
$\lambda_c$ reaches $\underline{\lambda}_c$ within
$\lceil \ln(\lambda_c^{(n_0)}/\underline{\lambda}_c) /
\ln\!\big(1/(1-\delta_\lambda)\big) \rceil$ updates and remains
there. The coefficient of the terminal violation in the shaped
reward \eqref{eq:rw} therefore never falls below
$\underline{\lambda}_c$.

(iii) (Exact penalty) Fix $\w$ and suppose the problem
$\min_{\pi} \w^{\!\top}\bm{J}(\pi)$ subject to
$\E_\pi[\xi_{\mathrm{term}}] = 0$ admits a geometric
multiplier $\lambda^{\ast}_{\w}$~\cite{c_bertsekas1999}. If
$\underline{\lambda}_c > \lambda^{\ast}_{\w}$, then for every
$\lambda \geq \underline{\lambda}_c$, each global minimizer of
$\w^{\!\top}\bm{J}(\pi) + \lambda\,\E_\pi[\xi_{\mathrm{term}}]$ is
feasible and solves the constrained problem.
\end{theorem}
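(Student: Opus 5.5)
Part (i) follows by induction on $n$. The base case is the hypothesis $\underline{\lambda}_c \leq \lambda_c^{(0)} < \overline{\lambda}_c$. For the inductive step, both branches of \eqref{eq:dual_floor} are passed through the projection $\Pi_{[\underline{\lambda}_c,\overline{\lambda}_c]}$. The projection of any real number onto a nonempty closed interval lies in that interval, so $\lambda_c^{(n+1)} \in [\underline{\lambda}_c,\overline{\lambda}_c]$ regardless of which branch fires. No property of $\bar{\xi}_c^{(n)}$ is used.

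Part (ii) uses a second induction starting at $n_0$. For $n \geq n_0$ only the decay branch is active, so $\lambda_c^{(n+1)} = \max\big(\underline{\lambda}_c, (1-\delta_\lambda)\lambda_c^{(n)}\big)$. The upper clip never binds: since $\lambda_c^{(n)} \leq \overline{\lambda}_c$ by (i) and $(1-\delta_\lambda)<1$, the product is strictly below $\overline{\lambda}_c$. Suppose $\lambda_c^{(n)} = \max\big(\underline{\lambda}_c,(1-\delta_\lambda)^{n-n_0}\lambda_c^{(n_0)}\big)$. Multiplication by the positive factor $1-\delta_\lambda$ distributes over the max, which gives
$\max\big((1-\delta_\lambda)\underline{\lambda}_c,(1-\delta_\lambda)^{n+1-n_0}\lambda_c^{(n_0)}\big)$.
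Taking the outer max with $\underline{\lambda}_c$ absorbs the term $(1-\delta_\lambda)\underline{\lambda}_c < \underline{\lambda}_c$. This closes the induction. The hitting time is the smallest $k$ with $(1-\delta_\lambda)^k \lambda_c^{(n_0)} \leq \underline{\lambda}_c$. Taking logarithms gives $k \geq \ln(\lambda_c^{(n_0)}/\underline{\lambda}_c)/\ln(1/(1-\delta_\lambda))$, hence the stated ceiling. Once the floor is reached, the formula shows the multiplier stays there. The claim about the reward coefficient is then immediate from (i), because $r^{\mathrm{con}}_t$ weights the terminal violation by $\lambda_c$.

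Part (iii) is the exact-penalty step and the only substantive one. Write $f(\pi)=\w^{\!\top}\bm{J}(\pi)$, $g(\pi)=\E_\pi[\xi_{\mathrm{term}}] \geq 0$ and $f^\ast$ for the constrained optimal value. By the definition of a geometric multiplier in~\cite{c_bertsekas1999}, $f^\ast = \inf_\pi \{f(\pi)+\lambda^\ast_{\w} g(\pi)\}$, so $f(\pi)+\lambda^\ast_{\w} g(\pi) \geq f^\ast$ for every $\pi$.

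Fix $\lambda \geq \underline{\lambda}_c > \lambda^\ast_{\w}$ and let $\hat\pi$ be a global minimizer of $L_\lambda = f+\lambda g$. Any feasible optimal $\pi^\ast$ gives $L_\lambda(\hat\pi) \leq L_\lambda(\pi^\ast) = f^\ast$; if the constrained infimum is not attained, I would argue with feasible $\epsilon$-optimal policies instead. I would then combine this with the multiplier inequality:
\[
f^\ast \geq f(\hat\pi)+\lambda g(\hat\pi) = \big(f(\hat\pi)+\lambda^\ast_{\w} g(\hat\pi)\big) + (\lambda-\lambda^\ast_{\w})g(\hat\pi) \geq f^\ast + (\lambda-\lambda^\ast_{\w})g(\hat\pi).
\]
Because $\lambda-\lambda^\ast_{\w}>0$ and $g \geq 0$, this forces $g(\hat\pi)=0$, so $\hat\pi$ is feasible. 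Feasibility then gives $f(\hat\pi)=L_\lambda(\hat\pi) \leq f^\ast$, so $\hat\pi$ solves the constrained problem.

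I expect the main obstacle to be conventions rather than calculation. Specifically, I must check that the equality constraint $g=0$ can be read as the inequality $g \leq 0$, which holds since $g\geq 0$, so that a nonnegative geometric multiplier in Bertsekas's sense applies. I must also check that $f^\ast$ is finite. The strict inequality $\underline{\lambda}_c>\lambda^\ast_{\w}$ is exactly what drives the feasibility conclusion.
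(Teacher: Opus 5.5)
Your proposal is correct and follows the paper's proof essentially step for step. Part (i) uses the projection. Part (ii) iterates the decay branch through the projection, with the base case $\lambda_c^{(n_0)}\ge\underline{\lambda}_c$ supplied by (i). Part (iii) combines the geometric-multiplier inequality $f^\ast \le f(\hat\pi)+\lambda^\ast_{\w} g(\hat\pi)$ with $L_\lambda(\hat\pi)\le f^\ast$ to obtain $(\lambda-\lambda^\ast_{\w})\,g(\hat\pi)\le 0$. Your extra checks (the upper clip never binding, the non-attained constrained infimum) are harmless refinements. The sign-convention worry is moot, because the argument uses only $\lambda>\lambda^\ast_{\w}$ and $g\ge 0$, whatever the sign of $\lambda^\ast_{\w}$.
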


\textbf{Proof.}
(i) $\lambda_c^{(0)}$ lies in the interval by hypothesis and the
projection in \eqref{eq:dual_floor} enforces it at every update.
(ii) By (i), $\lambda_c^{(n_0)}
\geq \underline{\lambda}_c$, and under the stated condition only
the decay branch
is active; iterating $\lambda \mapsto (1-\delta_\lambda)\lambda$
and applying the projection gives the expression, and the hitting
time follows by solving
$(1-\delta_\lambda)^{k}\lambda_c^{(n_0)} \leq
\underline{\lambda}_c$ for the smallest integer $k$.
(iii) Let $\pi_\lambda$ minimize the penalized cost and $\pi^{\ast}$
solve the constrained problem. The geometric multiplier property
gives
$\w^{\!\top}\bm{J}(\pi^{\ast}) \leq
\w^{\!\top}\bm{J}(\pi_\lambda) +
\lambda^{\ast}_{\w}\E_{\pi_\lambda}[\xi_{\mathrm{term}}]$,
while optimality of $\pi_\lambda$ gives
$\w^{\!\top}\bm{J}(\pi_\lambda) +
\lambda\,\E_{\pi_\lambda}[\xi_{\mathrm{term}}] \leq
\w^{\!\top}\bm{J}(\pi^{\ast})$. Subtracting,
$(\lambda - \lambda^{\ast}_{\w})\,
\E_{\pi_\lambda}[\xi_{\mathrm{term}}] \leq 0$, and since
$\lambda > \lambda^{\ast}_{\w}$ and
$\xi_{\mathrm{term}} \geq 0$,
$\E_{\pi_\lambda}[\xi_{\mathrm{term}}] = 0$. Feasibility and the
second inequality then give optimality. \hfill$\square$

No convergence claim is made for the coupled policy--multiplier
iteration; the floor trades a bounded conservatism for the exclusion
of the vanishing-pressure mode.

\subsection{Deterministic Execution of a Stochastic Constraint}

Training enforces the terminal constraint through the batch mean
violation $\bar{\xi}_c$ of \eqref{eq:dual_floor}, evaluated on
sampled rollouts, whereas the reported policy is executed
deterministically. The two quantities are related by the curvature of
the terminal-state map.

\begin{assumption}\label{as:soc}
For fixed exogenous input and engine on/off sequences, the
constrained terminal component is $x^{\ast}_T = \Phi(u)$ for a twice
continuously differentiable $\Phi$ of the continuous latent action
sequence $u = (u_0,\ldots,u_{T-1})$, with
$\|\nabla^2 \Phi(u)\| \leq C_\Phi$ on the reachable set. The
stochastic policy draws $u = \bar{u} + \eta$ with $\eta$ of
zero mean, independent across steps, and with per-step covariance
$\Sigma_t$, and the deterministic policy executes $\bar{u}$.
\end{assumption}

\begin{theorem}[Execution gap]\label{thm:gap}
Let $\xi_{\det} = \max(|\Phi(\bar{u}) - x_{\mathrm{ref}}| -
\varepsilon,\, 0)$ be the deterministic terminal violation and
$\bar{\xi} = \E_\eta[\max(|\Phi(\bar{u}+\eta) -
x_{\mathrm{ref}}| - \varepsilon,\, 0)]$ the stochastic violation of
the same policy. Under Assumption~\ref{as:soc},
\begin{equation}\label{eq:gap}
  \xi_{\det} \;\leq\; \bar{\xi}
  \;+\; \tfrac{1}{2}\, C_\Phi \sum_{t=0}^{T-1} \tr \Sigma_t .
\end{equation}
\end{theorem}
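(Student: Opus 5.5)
The bound \eqref{eq:gap} should follow from two elementary facts about the violation map $g(x) = \max(|x - x_{\mathrm{ref}}| - \varepsilon, 0)$: it is convex, and it is $1$-Lipschitz. With $g$ in hand, $\xi_{\det} = g(\Phi(\bar{u}))$ and $\bar{\xi} = \E_\eta[g(\Phi(\bar{u}+\eta))]$. I would compare both quantities with $g$ evaluated at the mean terminal state $m_\Phi = \E_\eta[\Phi(\bar{u}+\eta)]$.

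\emph{First step: Jensen.} Convexity of $g$ gives
\[
g(m_\Phi) \leq \E_\eta[g(\Phi(\bar{u}+\eta))] = \bar{\xi},
\]
so the stochastic violation dominates the violation at the mean terminal state. Convexity holds because $x \mapsto |x - x_{\mathrm{ref}}| - \varepsilon$ is convex and the pointwise maximum with $0$ preserves convexity. The Lipschitz property is immediate, since $|\cdot|$ and $\max(\cdot,0)$ are both $1$-Lipschitz.

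\emph{Second step: Lipschitz transfer.} The Lipschitz property then gives
\[
\xi_{\det} \leq g(m_\Phi) + |\Phi(\bar{u}) - m_\Phi| \leq \bar{\xi} + |\Phi(\bar{u}) - m_\Phi|.
\]
It remains to show $|\Phi(\bar{u}) - m_\Phi| \leq \tfrac12 C_\Phi \sum_t \tr\Sigma_t$.

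\emph{Third step: second-order Taylor expansion.} Under Assumption~\ref{as:soc}, Taylor's theorem with integral (or mean-value) remainder gives
\[
\Phi(\bar{u}+\eta) = \Phi(\bar{u}) + \nabla\Phi(\bar{u})^{\!\top}\eta + R(\eta), \qquad
|R(\eta)| \leq \tfrac12 C_\Phi \|\eta\|^2 .
\]
Taking expectations, the linear term vanishes because $\E\eta = 0$. This yields
\[
|m_\Phi - \Phi(\bar{u})| \leq \tfrac12 C_\Phi\, \E\|\eta\|^2 = \tfrac12 C_\Phi \sum_{t=0}^{T-1} \tr\Sigma_t ,
\]
where the last equality uses only that the per-step covariances sum along the diagonal. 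Independence across steps is not actually needed for this identity, which I would note. Combining the three steps gives \eqref{eq:gap}.

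The main obstacle is justifying the remainder bound. The Hessian bound $C_\Phi$ is assumed only ``on the reachable set'', while Taylor's theorem needs it along the whole segment $\{\bar{u} + s\eta : s \in [0,1]\}$. I would handle this by interpreting the reachable set as containing the convex hull of $\bar{u}$ and the support of $\bar{u}+\eta$; for example, the latent action set could be a box, which is convex. Failing that, I would state this inclusion explicitly as part of the assumption. A secondary point is integrability: $\E\|\eta\|^2 < \infty$ is implicit in the finite covariances $\Sigma_t$, and it ensures that $m_\Phi$ and the expected remainder are well defined.
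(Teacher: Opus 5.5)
Your proposal is correct and follows the paper's proof step for step. Both use convexity and $1$-Lipschitz continuity of the distance-to-interval map, Jensen's inequality at the mean terminal state $\E_\eta[\Phi(\bar{u}+\eta)]$, a second-order Taylor bound whose linear term vanishes because $\E[\eta]=0$, and a final Lipschitz transfer. Your side remarks are also sound: independence across steps is indeed not needed for $\E\|\eta\|^2 = \sum_t \tr\Sigma_t$, and the paper implicitly assumes the Hessian bound holds along the segment from $\bar{u}$ to $\bar{u}+\eta$, which your caveat about the reachable set makes explicit.
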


\textbf{Proof.}
The map $\xi(x) = \max(|x - x_{\mathrm{ref}}| - \varepsilon, 0)$ is
the distance from $x$ to the interval
$[x_{\mathrm{ref}} - \varepsilon, x_{\mathrm{ref}} + \varepsilon]$,
hence convex and $1$-Lipschitz. Jensen's inequality gives
$\xi(\E[\Phi(\bar{u}+\eta)]) \leq
\E[\xi(\Phi(\bar{u}+\eta))] = \bar{\xi}$. A second-order
Taylor expansion with Lagrange remainder gives
$|\Phi(\bar{u}+\eta) - \Phi(\bar{u}) -
\nabla\Phi(\bar{u})^{\!\top}\eta| \leq
\tfrac{1}{2} C_\Phi \|\eta\|^2$; taking expectations and using
$\E[\eta] = 0$,
$|\E[\Phi(\bar{u}+\eta)] - \Phi(\bar{u})| \leq
\tfrac{1}{2} C_\Phi \E\|\eta\|^2 =
\tfrac{1}{2} C_\Phi \sum_t \tr\Sigma_t$. Since $\xi$ is
$1$-Lipschitz,
$\xi_{\det} \leq \xi(\E[\Phi(\bar{u}+\eta)]) +
|\Phi(\bar{u}) - \E[\Phi(\bar{u}+\eta)]|$, and the two
bounds give \eqref{eq:gap}. \hfill$\square$

If $\Phi$ is affine then $C_\Phi = 0$ and deterministic execution is
at least as feasible as the stochastic average; otherwise the gap is
curvature-driven. Since the log standard deviations are clipped away
from zero, the second term of \eqref{eq:gap} does not vanish, and a
satisfied stochastic test certifies the sampled rollouts and not the
deterministic execution, a statement about the certificate rather
than the size of the realized violation. The floor of
Theorem~\ref{thm:floor} acts on $\bar{\xi}$; dense terminal-band
shaping in $r^{\mathrm{con}}_t$ reduces the remaining curvature
contribution.

\subsection{Pareto Characterization of the Extracted Front}

\begin{remark}[Weak Pareto optimality]\label{rem:pareto}
Fix $\w \in \simplex$ and let $\pi^{\ast}$ solve $P(\w)$ in
\eqref{eq:pw}. No $\pi' \in \Pi_{\varepsilon}$ satisfies
$\bm{J}(\pi') < \bm{J}(\pi^{\ast})$ componentwise, since that
would give
$\w^{\!\top}\bm{J}(\pi') < \w^{\!\top}\bm{J}(\pi^{\ast})$;
hence $\pi^{\ast}$ is weakly Pareto optimal in $\Pi_{\varepsilon}$,
and Pareto optimal when $\w$ is positive~\cite{c_miettinen1999}. The
extraction of Section~III therefore targets weakly Pareto optimal
policies over the feasible class, with only the weak guarantee at the
vertices of $\mathcal{W}$. Linear scalarization loses no Pareto
points for stationary stochastic policies~\cite{c_lu2023}; the
unconstrained conditioned-policy correspondence is established
in~\cite{c_kubo2026}.
\end{remark}

\section{VALIDATION ON HEAVY-DUTY POWERTRAIN ENERGY AND EMISSIONS MANAGEMENT}

\subsection{Setup and Protocol}

The plant is a heavy-duty series-hybrid (range-extender) powertrain
with engine-out emissions and selective catalytic reduction
aftertreatment~\cite{c_rownak2026,c_rownak2026cdc}. The actions are
binary engine on/off, engine power, and the
ammonia-to-$\mathrm{NO}_x$ dosing ratio; the six stage costs are
those of Section~II and the terminal constraint requires
$|x^{\ast}_T - 0.55| \leq 0.002$. Four drive cycles form the training
pool, each with its own multiplier in \eqref{eq:dual_floor}: a
heavy-duty transient cycle (HHDDT), a repeated highway cycle (HWFET)
and two windows of recorded route operation (FDDR-1, FDDR-2). Two further
cycles lie outside the training pool: PDMHC, used for monitoring, and
a third recorded window (FDDR-3). Episode lengths range from
$2{,}947$ to $8{,}156$ steps. The actor is a multilayer perceptron
trained with $B = 48$ rollouts per group for $600$ policy updates
with analytic terminal-credit decomposition. In \eqref{eq:wdraw},
$\alpha = 1$ and $p_{\mathrm{eq}} = 0.25$; in \eqref{eq:dual_floor},
$\kappa_\lambda = 1.03$, $\delta_\lambda = 0.01$, the caps are $3800$
for HHDDT and FDDR-1, $6500$ for HWFET and $10000$ for FDDR-2, and the
floor and initial multiplier are $0.3\,\overline{\lambda}_c$ and
$\sqrt{0.3}\,\overline{\lambda}_c$. The evaluation grid
$\mathcal{W}$ has $32$ preferences.

Ten training runs are performed with seeds fixed in advance; the
preference stream \eqref{eq:wdraw} is shared across runs, so the
extracted fronts are comparable point for point. All reported
evaluations are deterministic and every quantity is reported at the
last training iterate, without checkpoint selection, with front and
SOC detail shown for one representative run. The fixed-weight comparison uses seven
same-backbone trainings, at the equal-preference vector and the six
one-hot vertices, each solving \eqref{eq:pw} at one fixed $\w$ for
the same wall-clock without preference sampling or floor, and
evaluated at its own preference. The second baseline is a preference-conditioned
PPO-Lagrangian (PPO-L) agent~\cite{c_schulman2017,c_stooke2020} that replaces
the critic-free advantage of Section~III by generalized advantage
estimation from a learned $\w$-conditioned value function. All shared
elements match the proposed method, and all shared hyperparameters are
read from the same configuration source. Neither is tuned further, and
both use the same seeds and grid.

\begin{figure}[t]
\centering
\includegraphics[width=\columnwidth]{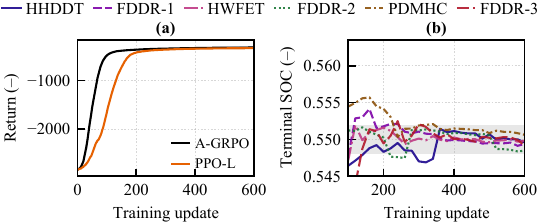}
\caption{(a) Training return: mean over the ten runs with the
25th--75th percentile band. (b) Deterministic terminal SOC at the
equal-preference anchor for the representative run, evaluated every
20 updates on all six cycles; the shaded band is the terminal
constraint $55.0 \pm 0.2\%$.}
\label{fig:training}
\end{figure}

\begin{figure*}[t]
\centering
\includegraphics[width=\textwidth]{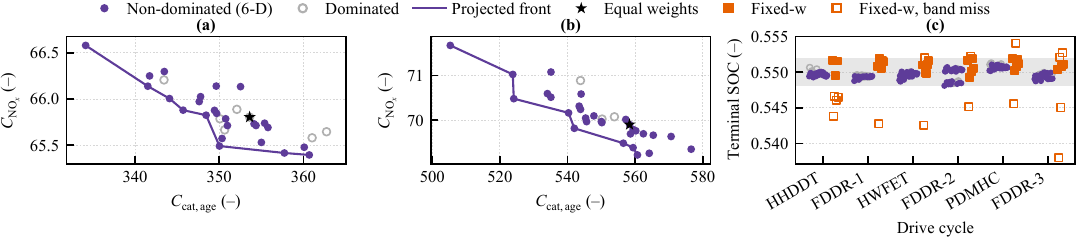}
\caption{Deterministic evaluation of the 32-preference grid at the
last iterate (representative run). (a), (b): catalyst aging against
tailpipe $\mathrm{NO}_x$ on HHDDT (in pool) and FDDR-3 (outside it);
filled circles are non-dominated in all six objectives, the line
connects those non-dominated in the projection, the star is the
equal-preference solution; fixed-weight solutions, extreme in the
objectives they ignore, are omitted. (c) Terminal SOC of all 32
solutions on all six cycles, with the seven fixed-weight trainings at
their own preference (squares, open if outside the band).}
\label{fig:fronts}
\end{figure*}

\subsection{Convergence and Terminal Feasibility}

The ten-run mean return improves from $-2{,}839$ to $-317$, with an
interquartile range of $45$ at the last update
(Fig.~\ref{fig:training}(a)). For the representative run the
equal-preference terminal SOC settles inside the band on all six
cycles over the last $240$ updates, with last-iterate landings between
$54.85\%$ and $55.06\%$ (Fig.~\ref{fig:training}(b)).

Whole-front feasibility is summarized in Fig.~\ref{fig:fronts}(c)
and Table~\ref{tab:results}. The representative run ends all $32$
solutions in band on every cycle. Across the ten runs the mean
is $31.6$--$31.9$ on FDDR-1, HWFET and FDDR-2, $28.5$ on PDMHC and
$26.7$ on FDDR-3, outside the training pool, where transfer varies
across runs. On HHDDT the mean is $23.2$; every solution of every run
lies in the band or no more than $0.2$ percentage points of SOC
outside it. The seven fixed-weight
trainings land three to six in band per cycle, at seven times the training cost
and with no intermediate trade-offs.

The PPO-L baseline reaches similar stochastic training feasibility, $99.9\%$ against $99.7\%$ over the last fifty
updates, and its front responds to $\w$. The difference is
deterministic transfer: on FDDR-3 its mean in-band count is $5.1$ of
$32$ against $26.7$, with median out-of-band miss $0.30\%$ SOC against
$0.19\%$. On HWFET, inside the pool, two of its ten runs place $26$
and $30$ of $32$ solutions $0.7$--$8.2$ percentage points of SOC below
the band despite stochastic feasibility above $99.7\%$ on that cycle. Among the in-band points, recomputed
six-dimensional non-dominance leaves $2.8$ baseline solutions against
$20.8$ for the proposed method on FDDR-3, and $13.6$ against $18.2$ on
HHDDT. The baseline also covers less of the objective space: its
spread over the preference grid is the smaller in five of the six
objectives on both HHDDT and FDDR-3, the widest gap being
$\mathrm{NH}_3$ slip at $8.1\%$ against $16.3\%$ and $9.2\%$ against
$16.3\%$ of the equal-preference value. Its equal-preference anchor is
in band on all six cycles in none of its ten runs, against six of ten,
and its return is lower, $-329.9$ against $-317.0$, so feasibility is
not exchanged for cost. This is the separation permitted by Theorem~\ref{thm:gap}:
comparable stochastic feasibility does not imply comparable
deterministic feasibility.

Feasibility is the stable quantity: every run places at least $28$ of
$32$ solutions in band on FDDR-1, HWFET and FDDR-2, whereas
the hypervolume of Table~\ref{tab:results}, over the feasible
non-dominated points of the ten runs normalized per cycle with
reference $1.1\,\bm{1}_m$, varies widely.

\begin{table}[t]
\centering
\caption{In-band solutions and hypervolume. Method entries:
mean\,/\,representative in-band counts out of $32$, and ten-run mean HV
with 25th--75th percentile range; PPO-L: ten-run means;
fixed-$\w$: counts of seven trainings.}
\label{tab:results}
\footnotesize
\setlength{\tabcolsep}{2.6pt}
\begin{tabular*}{\columnwidth}{@{}l@{\extracolsep{\fill}}cccc@{}}
\toprule
 & \multicolumn{2}{c}{Method (1 training, 32 solutions)} &
   PPO-L & Fixed-$\w$ \\
\cmidrule(lr){2-3} \cmidrule(lr){4-4} \cmidrule(l){5-5}
Cycle & in band & HV, mean [25--75\%] & in band & in band \\
\midrule
HHDDT  & $23.2\,/\,32$ & $0.16\;[0.04,0.30]$ & $15.9$ & $3/7$ \\
FDDR-1 & $31.7\,/\,32$ & $0.10\;[0.03,0.18]$ & $27.9$ & $6/7$ \\
HWFET  & $31.9\,/\,32$ & $0.09\;[0.03,0.15]$ & $25.0$ & $5/7$ \\
FDDR-2 & $31.6\,/\,32$ & $0.19\;[0.04,0.40]$ & $30.6$ & $5/7$ \\
PDMHC$^{\dagger}$  & $28.5\,/\,32$ & $0.10\;[0.03,0.20]$ & $28.1$ & $5/7$ \\
FDDR-3$^{\ddagger}$ & $26.7\,/\,32$ & $0.07\;[0.02,0.13]$ & $5.1$ & $3/7$ \\
\bottomrule
\multicolumn{5}{@{}l}{\scriptsize $^{\dagger}$monitoring only \quad
$^{\ddagger}$outside the training pool; neither is trained on}
\end{tabular*}
\end{table}

\subsection{Trade-Off Structure}

Fig.~\ref{fig:fronts}(a) and (b) project the strongest conflict,
catalyst aging against tailpipe $\mathrm{NO}_x$: the rank correlation
over the $32$ solutions is $-0.81$ on HHDDT and $-0.89$ on FDDR-3,
with $25$ and $29$ of them non-dominated in six dimensions. Lower
tailpipe $\mathrm{NO}_x$ requires higher dosing and thermal loading of
the catalyst, hence faster aging. Fuel is not a principal trade-off
dimension:
under the charge-sustaining constraint its cost varies by about $1\%$
across the front, against $8$--$13\%$ for catalyst aging and
$10$--$54\%$ for coverage tracking. Ordered by the first principal
component (Fig.~\ref{fig:heatmap}), the solutions move between an
emissions-favoring regime, in which tailpipe $\mathrm{NO}_x$,
$\mathrm{NH}_3$ slip and coverage error fall together while catalyst
aging rises, and its opposite; fuel is uncorrelated with the
ordering. For deployment, a solution is chosen from the feasible
non-dominated set and its preference is supplied to the conditioned
policy.

\begin{figure}[t]
\centering
\includegraphics[width=\columnwidth]{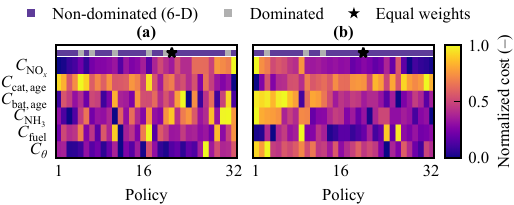}
\caption{Per-objective min--max normalized costs of the 32 solutions
(columns, ordered by the first principal component) on (a) HHDDT and
(b) FDDR-3; color encodes position within each objective. Top strip:
non-dominated (violet), dominated (grey), equal preference (star).}
\label{fig:heatmap}
\end{figure}

\subsection{Ablations}

The design was fixed by a one-factor study around a base that differs
from the method only in a relaxed ascent test,
$\bar{\xi}_c > 0.02\,\varepsilon$; each arm changes one element of the
base and retrains once. The exact test is
the only single change that brings the outside-pool anchor into the
band, at both floor levels ($54.92\%$ and $54.81\%$ against
$54.33$--$54.68\%$ for thresholds of $0.02$ to $0.10$): under a relaxed
test the multipliers rest on their floors ($\geq 97\%$ of the last
$100$ updates);
under the exact test the HHDDT multiplier stays above it. Moving the
normalization ahead of the preference weighting makes the validation
front land up to $0.3$ percentage points of SOC above the band and
vary little with the preference, consistent with the constraint-side
dilution of Proposition~\ref{prop:fidelity} and the displacement of
Corollary~\ref{cor:exclusion}; removing the floor as well does not
repair it, so the $2 \times 2$ check over placement and floor is
closed and, within the tested design, \eqref{eq:adv_sc} is necessary
rather than incidental.
Without the floor the multipliers decay by $1\%$ per update once the
stochastic feasibility test is passed, to $2$--$11\%$ of their initial
values, and the all-cycle in-band window shrinks to $20$ updates, the
mode excluded by Theorem~\ref{thm:floor}(ii); floors from $0.05$ to
$0.75$ of the cap vary the window between $0$ and $200$ updates and
leave all-cycle anchor feasibility at the last iterate unchanged. Removing the dense
terminal-band shaping
drives the multipliers to their caps, yet deterministic landings fall
up to $12$ percentage points of SOC below the band, with the anchor
outside it on every cycle, at stochastic training feasibility near
$88\%$: shaping teaches the end-of-trip recharge, the
floored multiplier calibrates its gain, and neither alone suffices.
The concentration $\alpha$, the anchor probability $p_{\mathrm{eq}}$,
the initial multiplier and the cap leave all-cycle anchor feasibility
at the last iterate unchanged.

\section{CONCLUSION}

A preference-conditioned extension of critic-free group-relative
policy optimization is presented for multi-objective control with a
terminal-state constraint. The analysis establishes three results:
under distinct normalization scales, per-objective normalization is
faithful only at the vertices of the simplex, a floored multiplier
prevents vanishing constraint pressure and renders the penalty exact
above a geometric multiplier, and a curvature bound separates
deterministic from stochastic feasibility. On heavy-duty
powertrain energy and emissions management with six stage costs, one
training run returns feasible trade-off fronts over a $32$-preference
grid on cycles inside and outside the training pool, at the cost of
one fixed-weight run. Within the tested design, normalization after
the weighting, the multiplier floor, the exact ascent test and the
terminal-band shaping are each required for deterministic feasibility
of the preference-indexed front. Future work will address full-front transfer outside the
pool and the coupled policy--multiplier iteration.

\section*{ACKNOWLEDGMENT}

Anthropic's Claude~\cite{c_anthropic2025} is used to assist with
manuscript editing and exposition. All technical content, algorithm
design, proofs and experimental results are the work of the authors.

\end{document}